\documentclass[a4paper,12pt]{article}
\usepackage{latexsym,amssymb,amsfonts,amsmath,amsthm}
\usepackage[dvips]{graphicx}

\setlength{\evensidemargin}{-3mm}
\setlength{\oddsidemargin}{-3mm}
\setlength{\topmargin}{-10mm}
\setlength{\textheight}{230mm}
\setlength{\textwidth}{165mm}

\usepackage{amsmath} 

\usepackage{bm}

\newtheorem{theorem}{Theorem}[section]

\newtheorem{proposition}[theorem]{Proposition}
\title{A walk on max-plus algebra}
\author{ 
{\small 
Sennosuke Watanabe,$^{1}$ 
Akiko Fukuda,$^{2}$ 
Etsuo Segawa,$^{3}$ 
Iwao Sato$^{1}$ 
}\\ 
{\scriptsize $^1$ 
Oyama National College of Technology, 
}\\
{\scriptsize 
Oyama, Tochigi 323-0806, Japan
}\\
{\scriptsize $^{2}$ 
Department of Mathematical Sciences, Shibaura Institute of Technology, 
}\\
{\scriptsize 
Saitama 337-8570, Japan
}\\
{\scriptsize $^3$ 
Graduate School of Environment Information Sciences, 
}\\
{\scriptsize 
Yokohama National University, Yokohama, 240-8501, Japan
}\\
} 
\vskip 1cm

\date{\empty }
\pagestyle{plain}

\begin{document}
\maketitle

\par\noindent
\begin{small}
\par\noindent
{\bf Abstract}. 
Max-plus algebra is a kind of idempotent semiring over $\mathbb{R}_{\max}:=\mathbb{R}\cup\{-\infty\}$ with two operations 
$\oplus := \max$ and $\otimes := +$.
In this paper, we introduce a new model of a walk on one dimensional lattice on $\mathbb{Z}$, as an analogue of the quantum walk, over the max-plus algebra and we call it max-plus walk. 
In the conventional quantum walk, the summation of the $\ell^2$-norm of the states over all the positions is a conserved quantity.
In contrast, the summation of eigenvalues of state decision matrices is a conserved quantity in the max-plus walk.
Moreover, spectral analysis on the total time evolution operator is also given.

\footnote[0]{
{\it Key words and phrases.} 
Max-plus algebra, quantum walk, directed graph
}

\end{small}

\setcounter{equation}{0}

\section{Introduction}
Ultradiscretization is a technique which transform a difference equation into piecewise linear equation and appears in the context of integrable systems \cite{TTMS1996}. 
It is based on the following formula.
\begin{eqnarray}\label{eq:UD}
\lim_{\epsilon \to +0 }
\epsilon\log(e^{A/\epsilon}+e^{B/\epsilon}) = \max\{A,B \}.
\end{eqnarray}
It is known that essential properties are preserved by the ultradiscretization for integrable systems \cite{Watanabe2018}. 
Since ultradiscrete equations can be written by using operations ``$\max$'' and ``$+$'',
they can be considered over max-plus algebra. 

In max-plus algebra, the sum of two elements
is their maximum and the product of two elements is their sum. 
This algebraic structure is known as idempotent semiring. 
Replacing maximum with minimum,  we get isomorphic min-plus algebra. 
Max-plus algebra has many analogies for conventional linear algebra \cite{BCOQ,Olsder,Schutter}. 
Intensive studies on quantum walks appear in the beginning of 2000's from the view point of quantum information e.g. \cite{Am,Meyer}. 
Now we can see overlaps of quantum walks to not only quantum information but also various kinds of research fields. 
One of the reason for the studies on quantum walks is not only the efficiency on quantum search algorithm but also 
application as a quantum simulator of quantum phenomena (on a quantum device) as envisioned by Feynman \cite{Fey} 
because of its universarity of the quantum computation~\cite{Childs}. 
For example, as a simulator of the Dirac equation e.g.,~\cite{MAMP,Sh} and its reference therein, 
while quantum graphs~\cite{HKSS, MMOS} which is a system of stationary Schr{\"o}dinger equations
on metric graphs, and topological insulator \cite{Kitagawa}. 
The time evolution of a discrete-time quantum walk is described by the iteration of a unitary operator 
on some Hilbert space generated by a discrete set. 
Then due to the unitary of the time evolution operator, we can define a distribution at each time step 
because the time evolution operator preserves the norm. 

From now on, we restrict ourselves to one-dimensional lattice. 
Let the standard basis of $\mathbb{C}^2$ be denoted by $|L\rangle=[1,0]^{\top}$, $|R\rangle=[0,1]^{\top}$. 
We put $\langle L|=(|L\rangle)^*$ and $\langle R|=(|R\rangle)^*$, respectively. 
Here $A^*$ is the conjugate and transpose of $A$. 
Let 
\begin{eqnarray*} 
{\bf H}= \left[ \begin{array}{cc} {\rm a}&{\rm b} \\ {\rm c}&{\rm d} \end{array} \right]
\end{eqnarray*}
be a unitary matrix on $\mathbb{C}^{2}$ and ${\bf P}=|L\rangle\langle L|{\bf H}$, ${\bf Q}=|R\rangle\langle R|{\bf H}$. 
The $\mathbb{C}^{2}$-valued amplitude $\varphi_{k}^{n}$, which is regarded as 
the $n$-th time evolution of the quantum walk at position $k\in\mathbb{Z}$, 
satisfies the following recursion equation: 
\begin{eqnarray}\label{eq:time_evolutionQW}
\varphi_{k}^{n}={\bf P}\varphi_{k+1}^{n-1}+{\bf Q}\varphi_{k-1}^{n-1}. 
\end{eqnarray}
The weights associated with left and right movings are ${\bf P}$ and ${\bf Q}$, respectively. 
This is a kind of quantum analogue of a random walk; ${\bf P}$ and ${\bf Q}$ with ``${\bf P}+{\bf Q}\in $ $2$-dimensional unitary matrix" 
corresponds to the probabilities associated with left and right movings $p$ and $q$ with ``$p+q=1$" in the random walk on $\mathbb{Z}$. 
Suppose that the initial state is 
\begin{eqnarray*}
\varphi_0^k=
\left\{
\begin{array}{c}
\phi,\quad k=0,\\
{\bm 0},\quad k\ne 0
\end{array}
\right.
\end{eqnarray*}
 with $\|\phi\|^2=1$. 
Then the state $\varphi_{k}^{n}$ is expressed by the summation over the matrix valued weights which are products of ${\bf P}$'s and ${\bf Q}$'s 
associated with $n$-length walks starting from the origin to position $k$ on $\mathbb{Z}$. 
Let us denote the above ($2$-dimensional) matrix valued weights be ${\bf A}_k^n$. 
Then $\varphi_k^n$ can be rewritten by 
\begin{eqnarray}
\varphi_{k}^{n}={\bf A}_{k}^{n}\phi.
\end{eqnarray}
In this paper we call ${\bf A}_{k}^{n}$ a state decision matrix. 
The explicit expression for ${\bf A}_{k}^{n}$ corresponding to $\binom{n}{n-k}p^{\frac{n+k}{2}}q^{\frac{n-k}{2}}$ for the random walk is obtained by Konno~(see \cite{Konno2008} and its reference therein). 
Let $\ell=(n-k)/2$ and $m=(n+k)/2$, which are the number of moving left and right during the $n$-step. Then 
\begin{eqnarray}\label{eq:QW_amplitude}
{\bf A}_{k}^{n} =
{\rm a}^{\ell}{\rm d}^{m}\sum_{r=1}^{\ell \wedge m} \left(\frac{{\rm bc}}{{\rm ad}}\right)^{r}
\binom{\ell-1}{r-1}
\binom{m-1}{r-1}
\left( \frac{\ell-r}{{\rm a}r}{\bf P}+\frac{m-r}{{\rm d} r}{\bf Q}+\frac{1}{{\rm c}}{\bf R}+\frac{1}{{\rm b}}{\bf S} \right), 
\end{eqnarray}
where ${\bf R}=|L\rangle \langle R|{\bf H}$ and ${\bf S}=|R\rangle\langle L|{\bf H}.$
\footnote{Note that ${\bf P}$, ${\bf Q}$, ${\bf R}$, ${\bf S}$ are the complete orthogonal basis on $2$-dimensional matrices 
in terms of the inner product $\mathrm{tr}(X^*Y)$ if $X\neq Y$, for any $2$-dimensional matrices $X$ and $Y$. 
}

The distribution at time $n$ and position $k$ is defined by $\mu_k^n=\|\varphi_k^n\|^2_{\mathbb{C}^2}$. 
We can regard it as the probability that a quantum walker is observed at time $n$ and position $k$. 
We can see some interesting properties of this quantum walk in the following weak limit theorem~(see \cite{Konno2008} and its reference therein):
if the initial state $\phi=[{\bm \alpha},{\bm \beta}]^{\top}$ satisfies $|{\bm \alpha}|=|{\bm \beta}|$ and $\mathrm{Re}({\rm a}{\bm \alpha}\overline{{\rm b}{\bm \beta}})=0$, then the scaled limit distribution is
\begin{eqnarray}
\lim_{n\to\infty}\sum_{k/n\leq u}\mu_{k}^{n}= \int_{-\infty}^{u} f_K(v;|{\rm a}|) dv \label{eq:Konno}
\end{eqnarray}
for any $u\in\mathbb{R}$, where 
\[ f_K(v;|{\rm a}|):=\frac{\boldsymbol{1}_{(-|{\rm a}|,|{\rm a}|)}(v)|{\rm b}|}{\pi(1-v^2)\sqrt{|{\rm a}|^2-v^2}}\]
and $\boldsymbol{1}_{(-|{\rm a}|,|{\rm a}|)}(v)$ is the indicator function on $(-|{\rm a}|,|{\rm a}|)$. 
This weak limit theorem corresponds to so called the central limit theorem of the random walk 
but the scaling order is quadratically larger than random walk and the limit density is far from the normal distribution. 
By the Fourier analysis on the quantum walk, the spectrum of the total time evolution operator $\mathcal{U}$ of 
this quantum walk can be described by $\mathrm{Spec}(\mathcal{U})=\{ {\Delta'}^{1/2} \exp(i\theta) \;|\; \cos\theta\in [-|{\rm a}|,|{\rm a}|]\}$, 
where $\Delta'=\det({\bf H})$. 
Moreover the spectral measure and its associated Laurent polynomials of this time evolution operator are obtained (see \cite{CGMV} and its reference therein).

In this paper, we try to obtain a new model of walk over the max-plus algebra which is analogous to the conventional quantum walk. 
We call it max-plus walk. 
To this end, first we need to determine a measurement process in the max-plus walks. 
Due to the unitarity of the time evolution of the quantum walks, the $\ell^2$-norm is preserved in the quantum walks. 
Then we can define a ``distribution" at each time step $n$ if the norm of the initial state is unit. 
Therefore in the quantum walks, a conserved quantity which are independent of the time iteration $n$, 
is the summation of the $\ell^2$-norm of the probability amplitudes and also the Frobenius norm of the state decision matrices over all the positions. 
So as an analogue of it, in this paper, we propose the conserved quantity of the max-plus walk
by the summation of the max-plus eigenvalues of the state decision matrices over all the positions. 
That is, letting $\lambda(A_k^n)$ be the eigenvalue of the state decision matrix $A_k^n$ on the max-plus algebra, then we define the 
conserved quantity as $\sum_{k}\lambda(A_k^n)$. 
We obtain a necessary and sufficient condition of the setting of the max-plus walk for conserving this value (see theorem \ref{thm:ConservedQuantity}). 
This condition in the max-plus walk corresponds to {\it unitarity} of the local quantum coin ${\bf H}$ in the quantum walks. 
Using this conservative property, we can define a quantity at each position corresponding to the probability distribution, 
and obtain its explicit expression for each time step.  
Under this conservative condition, we consider the spectral analysis on the time evolution operator on the infinite whole system
on the max-plus algebra, and obtain the eigenvalue and its eigenvector using a graph theoretical approach (see theorem \ref{thm:SpectralAnalysis}). 

This paper is organized as follows. 
In section 2,  we explain the definition and properties of the max-plus algebra. 
After the ultradiscretization (\ref{eq:UD}) of the recursion equation of the quantum walk (\ref{eq:time_evolutionQW}), 
the operation of the time evolution changes to the the max-plus algebra. 
Then in section 3, we devote to the amplitude of matrix valued weighted walks on the max-plus algebra namely the state decision matrix. 
On the time evolution of the max-plus walk, 
we obtain an explicit expression for the state decision matrix $A_k^n$ which corresponds to the state decision matrix 
${\bf A}_n^k$ of the quantum walk (\ref{eq:QW_amplitude}) (see theorem \ref{thm:MPW_amplitude}). 
Interestingly, although we just take a ultradiscretization of the discrete recursion equation describing the time evolution of the quantum walk, 
we obtain $A_k^n$, which is the $n$-th iteration of the max-plus walk at position $k$, is also nothing but the ultradiscretization of ${\bf A}_k^n$.  
Then in section 4, we propose the conserved quantities of the max-plus walk corresponding to the $\ell^2$ conservation of the quantum walks and show a necessary and sufficient condition of the setting of the max-plus walk. 
In section 5, under the condition, we obtain the eigenvalue and its eigenvector on the max-plus algebra. 
In section 6, we finally give concluding remarks and discussions. 

\section{Preliminaries on max-plus algebra}

Let $\mathbb{R}_{\max}=\mathbb{R}\cup \{-\infty \}$ be 
the set of all real numbers $\mathbb{R}$ together with 
an extra element $-\infty$ expressing negative infinity. 
We define two operations, addition $\oplus$ and multiplication 
$\otimes$, in $\mathbb{R}_{\max}$ in terms of conventional 
operations by
\begin{eqnarray*}
a\oplus b =\max\{a,b \}, \quad 
a\otimes b =a+b \quad (a,b\in \mathbb{R}_{\max}).
\end{eqnarray*}
Remark here that, in this paper, we sometimes use $\max$ and $+$ instead of $\oplus$ and $\otimes$, respectively, for convenience. 

Then $(\mathbb{R}_{\max},\oplus,\otimes)$ is a commutative semiring 
called max-plus algebra. 
Here, $-\infty$ is the identity element for addition and $0$ is 
the identity element for multiplication:  
we express as $\varepsilon=-\infty$ and $e=0$, respectively. 
For details about the max-plus algebra refer to \cite{BCOQ}.
\par
Let $\mathbb{R}_{\max}^{m\times n}$ be the set of $m\times n$ 
matrices whose entries are in $\mathbb{R}_{\max}$. 
The set $\mathbb{R}_{\max}^{n\times 1}$ of vectors is abbreviated as 
$\mathbb{R}_{\max}^{n}$.
The arithmetic operations on vectors and matrices 
are defined as those in the conventional linear algebra.
For max-plus matrices 
$A=([A]_{ij}), B=([B]_{ij})\in \mathbb{R}_{\max}^{m\times n}$, 
we define the matrix sum 
$A\oplus B =([A\oplus B]_{ij})\in \mathbb{R}_{\max}^{m\times n}$ by 
\begin{eqnarray*}
[A\oplus B]_{ij}= [A]_{ij} \oplus [B]_{ij}.
\end{eqnarray*}
For max-plus matrices 
$A=([A]_{ij})\in \mathbb{R}_{\max}^{m\times p}$ and 
$B=([B]_{ij})\in \mathbb{R}_{\max}^{p\times n}$, 
we define the matrix multiplication 
$A\otimes B=([A\otimes B]_{ij})\in \mathbb{R}_{\max}^{m\times n}$ by
\begin{eqnarray*}
[A\otimes B]_{ij}= \bigoplus_{k=1}^{p}
[A]_{ik} \otimes [B]_{kj}.
\end{eqnarray*}
For a max-plus matrix $A=([A]_{ij})\in \mathbb{R}_{\max}^{m\times n}$ 
and a scalar $\alpha \in \mathbb{R}_{\max}$, 
we define the scalar multiplication 
$\alpha \otimes A=([\alpha \otimes A]_{ij})\in \mathbb{R}_{\max}^{m\times n}$
by 
\begin{eqnarray*}
[\alpha\otimes A]_{ij}= \alpha \otimes [A]_{ij}.
\end{eqnarray*}
The matrix $I_n\in \mathbb{R}_{\max}^{n\times n}$ whose diagonal 
entries are $e$ and other entries are $\varepsilon$ is the identity matrix.
\par
For a matrix $A=([A]_{ij})\in \mathbb{R}_{\max}^{n\times n}$, 
we define the tropical determinant of $A$ by 
\begin{eqnarray*}
\text{tropdet}(A)=
\bigoplus_{\sigma \in S_n} [A]_{1\sigma(1)}\otimes 
[A]_{2\sigma(2)}\otimes \cdots \otimes [A]_{n\sigma(n)}
 \end{eqnarray*}
where $S_n$ denotes the symmetric group of order $n$. 
\par
Next, we review the relationships between the eigenvalue problem 
of min-plus matrices and the corresponding digraphs.
A digraph is a orderd pair $G=(V,E)$ where $V$ is a nonempty finite 
set and $E\subseteq V\times V$. 
Elements $v\in V$ and $e\in E$ are called vertices and
edges, respectively. 
A sequence $P=(v_1,v_2,\dots,v_s)$ of vertices is called a $v_1$--$v_s$ path if 
$(v_i,v_{i+1})\in E$ for all $i=1,2,\dots,s-1$.
The number $s-1$ is called the length of $P$ and is denoted by $\ell(P)$.
A $v_1$--$v_s$ path is called a circuit if $v_1=v_s$. 
A digraph $G$ is called a strongly connected if there
exists a $u$--$v$ path for all vertices $u,v\in V$. 
A weighted digraph is the tuple $G=(V,E,w)$ where 
$(V,E)$ is a digraph and $w$ is a real function on $E$.
If $P=(v_1,v_2,\dots,v_s)$ is a path in the weighted 
digraph $G=(V,E,w)$, 
then the weight of $P$ is 
$w(P)=w((v_1,v_2))+w((v_2,v_3))+\cdots +w((v_{s-1},v_s))$.
For the circuit $C$ in $G=(V,E,w)$, 
the average weight $\text{ave}(C)$ of $C$ is defined as 
\begin{eqnarray*}
\text{ave}(C)=\frac{w(C)}{\ell(C)}.
\end{eqnarray*}
Let $G=(V,E,w)$ be a weighted digraph with $V=\{1,2,\dots,n\}$, 
then the weighted adjacency matrix 
$A(G)=([A]_{ij})\in \mathbb{R}_{\max}^{n\times n}$
is defined by 
\begin{eqnarray*}
[A]_{ij}=
\left\{
\begin{array}{ll}
w((i,j)) & \text{if } (i,j)\in E, \\
\varepsilon & \text{otherwise}.
\end{array}
\right.
\end{eqnarray*}
Conversely, for any max-plus matrix 
$A\in \mathbb{R}_{\max}^{n\times n}$, 
there exists a weighted digraph whose weghted matrix is $A$.
We denote such a weighted digraph by $G(A)$.
\par
Let $G=(V,E,w)$ be a weighted digraph and $A(G)$ be the weighted
adjacency matrix of $G$. 
For $i,j=1,2,\dots,n$, we denote by $a_{ij}^*$ the maximum
value of weights of all $i$--$j$ paths in $G$.
We set $a_{ij}^* =\varepsilon$ if there exists no $i$--$j$ paths.
The maximum weight matrix $A^*(G)=A^* \in \mathbb{R}_{\max}^{n\times n}$
is defined by $A^* =(a_{ij}^*)$.

\begin{proposition}[\cite{BCOQ}]
For a max-plus matrix $A\in \mathbb{R}_{\max}^{n\times n}$, 
the maximum weight matrix $A^*$ of $G(A)$ can be computed by 
the following power series:
\begin{eqnarray*}
A^* =I\oplus A \oplus A^{\otimes 2} \oplus \cdots.
\end{eqnarray*}
Moreover, if the weighted digraph $G(A)$ has no positive circuits, then
\begin{eqnarray*}
A^* = I\oplus A \oplus A^{\otimes 2} \oplus \cdots \oplus A^{\otimes n-1}.
\end{eqnarray*}
\end{proposition}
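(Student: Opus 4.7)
The plan is to interpret entries of $A^{\otimes k}$ combinatorially as maximum weights of length-$k$ paths in $G(A)$, then take the $\oplus$-sum over all $k$ and match it with the definition of $a_{ij}^*$.

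First I would establish the key combinatorial identity: for every $k\geq 1$,
\begin{equation*}
[A^{\otimes k}]_{ij} = \bigoplus_{P:\,i\to j,\,\ell(P)=k} w(P),
\end{equation*}
with the convention that the $\oplus$ over an empty set of paths equals $\varepsilon$. I would prove this by induction on $k$. The base case $k=1$ follows directly from the definition of the weighted adjacency matrix $A(G)$. For the inductive step, I would expand
\begin{equation*}
[A^{\otimes(k+1)}]_{ij}=\bigoplus_{m=1}^n [A^{\otimes k}]_{im}\otimes [A]_{mj},
\end{equation*}
use the inductive hypothesis on the first factor, and observe that concatenating a length-$k$ path from $i$ to $m$ with the edge $(m,j)$ yields exactly the length-$(k+1)$ paths from $i$ to $j$, with weights adding under $\otimes$. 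A further convention handles the $k=0$ case: $[I]_{ij}=e$ if $i=j$ and $\varepsilon$ otherwise, which corresponds to the trivial (length-$0$) path staying at $i$ with weight $e$.

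Having this, the first statement follows by taking $\bigoplus_{k\geq 0}$ of both sides: the right-hand side becomes the supremum over all paths (of any length, including the trivial one) from $i$ to $j$, which is precisely $a_{ij}^*$ by definition, and the left-hand side is $[I\oplus A\oplus A^{\otimes 2}\oplus\cdots]_{ij}$.

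For the second claim, under the hypothesis that $G(A)$ has no positive circuits, I would show that it suffices to take paths of length at most $n-1$. Given any $i$-$j$ path $P$ with $\ell(P)\geq n$, some vertex must be repeated, so $P$ decomposes as a prefix, a circuit $C$, and a suffix. Since $w(C)\leq 0$, excising $C$ yields a shorter $i$-$j$ path $P'$ with $w(P')\geq w(P)$. Iterating this surgery, I obtain a simple $i$-$j$ path (or the trivial path if $i=j$) of length at most $n-1$ whose weight is at least $w(P)$. Hence the supremum is attained on paths of length $\leq n-1$, which by the combinatorial identity gives $A^* = I\oplus A\oplus\cdots\oplus A^{\otimes n-1}$.

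The main obstacle is ensuring the circuit-excision argument terminates and actually produces a simple path; the cleanest route is induction on $\ell(P)$, noting that each excision strictly reduces the length while preserving or increasing the weight, so finitely many steps suffice. A minor subtlety is the bookkeeping for $i=j$: the trivial length-$0$ contribution $e$ comes from the $I$ term, and must not be overlooked when comparing with $a_{ii}^*$.
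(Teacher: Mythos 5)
Your proof is correct. Note that the paper does not prove this proposition at all---it is quoted from \cite{BCOQ} as a known fact---so there is no internal argument to compare against; your route (the inductive identity $[A^{\otimes k}]_{ij}=\bigoplus_{\ell(P)=k}w(P)$, followed by circuit excision using $w(C)\leq 0$ to cap path lengths at $n-1$) is the standard textbook proof and is complete, including the correct handling of the length-$0$ path via the $I$ term. The only caveat worth flagging, which the statement itself also glosses over, is that when $G(A)$ does have positive circuits the series need not stabilize and some entries of $A^*$ are $+\infty$, so the first identity should really be read in the completed semiring or under the implicit assumption that the supremum is finite.
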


For a max-plus matrix $A\in \mathbb{R}_{\min}^{n\times n}$, 
if there exist $\lambda \in \mathbb{R}_{\max}$ and 
$\bm{x}\in \mathbb{R}_{\max}^{n}\setminus 
\{(\varepsilon,\varepsilon,\dots,\varepsilon)^{\top}\}$ satisfying 
\begin{eqnarray*}
A\otimes \bm{x} = \lambda \otimes \bm{x}, 
\end{eqnarray*}
then $\lambda$ and $\bm{x}$ are called an eigenvalue and 
its corresponding eigenvector, respectively.

The eigenvalues and its corresponding eigenvector are 
shown in \cite{BCOQ} to have interesting relationships 
with circuits and maximum weight matrices in the weighted digraph.

\begin{proposition}[\cite{BCOQ}]\label{prop:mp_eig}
Let $A\in \mathbb{R}_{\max}^{n\times n}$ be a max-plus matrix.
If $A$ has an eigenvalue $\lambda\not=\varepsilon$, 
then there exists a circuit in the weighted digraph $G(A)$ 
whose average weight is equal to $\lambda$.
In particular, if $G(A)$ is strongly connected, 
there is only one eigenvalue of $A$ 
which is equal to the maximum average weight of circuits.
\end{proposition}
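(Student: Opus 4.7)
The plan is to prove the two claims by working directly with the componentwise form of the tropical eigenvalue equation at the level of the supporting digraph. The first claim will be obtained by extracting a circuit that realizes $\lambda$ from the argmax structure of $A\otimes\bm{x}$; the second by showing that in the strongly connected case every circuit has average weight at most $\lambda$, so that the first claim identifies $\lambda$ with the maximum circuit mean.

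Concretely, write the eigenvalue equation componentwise as $\bigoplus_{j}[A]_{ij}\otimes x_j=\lambda\otimes x_i$, restricted to the support $S=\{i:x_i\neq\varepsilon\}$. Since the maximum is attained, for each $i\in S$ there exists $\pi(i)\in S$ with $[A]_{i,\pi(i)}+x_{\pi(i)}=\lambda+x_i$. Iterating $\pi$ from any vertex in the finite set $S$ eventually revisits some vertex, yielding a circuit $C=(j,\pi(j),\ldots,\pi^{k}(j)=j)$ in $G(A)$. Summing the defining identities along $C$, the $x$-values telescope and leave $w(C)=k\lambda$, hence $\mathrm{ave}(C)=\lambda$, proving the first claim.

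For the strongly connected case I would first show that any eigenvector has all entries in $\mathbb{R}$: if $x_i=\varepsilon$, then $\max_j([A]_{ij}+x_j)=\varepsilon$ forces $x_j=\varepsilon$ for every out-neighbour $j$ of $i$, and strong connectivity then propagates this to every vertex, contradicting $\bm{x}\neq(\varepsilon,\ldots,\varepsilon)^{\top}$. Once $x_i\in\mathbb{R}$ for all $i$, pick any circuit $C=(v_0,v_1,\ldots,v_k=v_0)$ and apply the trivial bound $[A]_{v_\ell,v_{\ell+1}}+x_{v_{\ell+1}}\le\lambda+x_{v_\ell}$ (a single term of the maximum is bounded by the maximum). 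Summing in $\ell$ and telescoping the $x$-terms yields $w(C)\le k\lambda$, i.e., $\mathrm{ave}(C)\le\lambda$. Combined with the first claim, $\lambda$ coincides with the maximum average weight of circuits, and any two eigenvalues must agree.

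The step I expect to be most delicate is the finiteness-of-eigenvector argument under strong connectivity, since one must correctly identify the direction in which $\varepsilon$-coordinates propagate along edges of $G(A)$ to close the contradiction. For the converse direction, showing that the maximum circuit mean $\mu$ is actually realized as an eigenvalue, I would invoke the preceding proposition on $A^{\ast}$: after subtracting $\mu$ from every entry of $A$ the normalized matrix has no positive circuits, so the power series for $(A\otimes(-\mu))^{\ast}$ terminates, and a column indexed by a vertex lying on a critical (mean-$\mu$) circuit then serves as an eigenvector with eigenvalue $\mu$.
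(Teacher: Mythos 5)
The paper offers no proof of this proposition: it is imported verbatim from \cite{BCOQ} as a known fact, so there is no in-paper argument to compare yours against. Judged on its own, your proof is the standard one and it is correct. For the first claim, the selection of a successor $\pi(i)\in S$ with $(i,\pi(i))$ an edge of $G(A)$ is legitimate precisely because $\lambda\neq\varepsilon$ and $x_i\neq\varepsilon$ make the right-hand side $\lambda\otimes x_i$ finite, and the telescoping along the periodic part of the $\pi$-orbit gives $w(C)=\ell(C)\cdot\lambda$ as you state. For the second claim you have the propagation direction right ($x_i=\varepsilon$ forces $x_j=\varepsilon$ for every out-neighbour $j$, and reachability of all vertices from $i$ closes the contradiction), and the term-by-term bound followed by telescoping gives $\mathrm{ave}(C)\le\lambda$ for every circuit, so any eigenvalue must equal the maximum circuit mean and is therefore unique. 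Two small points you should make explicit: (i) in the strongly connected case one should also observe that $\lambda=\varepsilon$ is impossible, since once all $x_i$ are finite every vertex has an out-neighbour and hence $\bigoplus_j[A]_{ij}\otimes x_j$ is finite; (ii) the existence of an eigenvalue equal to the maximum circuit mean, which is needed if ``there is only one eigenvalue'' is read as asserting existence rather than mere uniqueness, is not proved by the two telescoping arguments alone; your appeal to the terminating power series for $(-\mu\otimes A)^{*}$ and a column indexed by a critical vertex is the right fix and is exactly the machinery behind the paper's Proposition~\ref{prop:mp_evec}.
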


\begin{proposition}[\cite{BCOQ}]
\label{prop:mp_evec}
Let $\lambda \not=\varepsilon$ be an eigenvalue of 
$A\in \mathbb{R}_{\max}^{n\times n}$ and $\bm{x}_i$ be the 
$i$th column of $(-\lambda\otimes A)^*$.
If a vertex $i$ is contained in the circuit whose average 
weight is equal to $\lambda$, 
then $\bm{x}_i$ is eigenvector corresponding to $\lambda$ of $A$.
\end{proposition}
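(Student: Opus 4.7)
My plan is to normalize away the eigenvalue so that the eigenvector condition becomes a fixed-point identity for $B^*$, and then verify that identity column by column using the path interpretation afforded by the previous proposition.

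First I would set $B := (-\lambda)\otimes A$, i.e.\ subtract $\lambda$ from every finite entry of $A$. Then $A\otimes \bm{x}=\lambda\otimes \bm{x}$ is equivalent to $B\otimes \bm{x}=\bm{x}$, the $i$th column $\bm{x}_i$ of $(-\lambda\otimes A)^*$ is exactly the $i$th column of $B^*$, and every circuit $C$ of $G(B)$ satisfies $w_B(C)=w_A(C)-\lambda\,\ell(C)$. Because $\lambda$ equals the maximum circuit mean of $G(A)$ (Proposition \ref{prop:mp_eig}), $G(B)$ has no positive circuits, and the previous proposition collapses the series to $B^* = I\oplus B\oplus B^{\otimes 2}\oplus\cdots\oplus B^{\otimes(n-1)}$. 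In particular $[B^*]_{ji}$ is the maximum $G(B)$-weight over all walks from $j$ to $i$ of length $\geq 0$, with the empty walk contributing $e=0$ when $j=i$.

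The identity I need is $B\otimes \bm{x}_i=\bm{x}_i$, that is $[B\otimes B^*]_{ji}=[B^*]_{ji}$ for every $j$. Here $[B\otimes B^*]_{ji}=\bigoplus_{k}[B]_{jk}\otimes [B^*]_{ki}$ is the maximum $G(B)$-weight over walks from $j$ to $i$ of length $\geq 1$. If $j\neq i$ every walk from $j$ to $i$ already has positive length, so the two sides agree automatically. The heart of the argument is the diagonal case $j=i$, where I need $[B\otimes B^*]_{ii}=e$. The upper bound $[B\otimes B^*]_{ii}\leq e$ is immediate because $G(B)$ has no positive circuits and $[B\otimes B^*]_{ii}$ is a maximum over circuits through $i$ of length $\geq 1$. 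The lower bound uses the hypothesis: vertex $i$ lies on a circuit $C$ in $G(A)$ with $\mathrm{ave}(C)=\lambda$, and translating to $G(B)$ yields $w_B(C)=w_A(C)-\lambda\,\ell(C)=0$, which realizes the maximum. Combined, $[B\otimes B^*]_{ii}=0=[B^*]_{ii}$.

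Finally $\bm{x}_i\neq (\varepsilon,\ldots,\varepsilon)^{\top}$ because $[B^*]_{ii}=0\neq\varepsilon$, so $\bm{x}_i$ is genuinely an eigenvector. The delicate step is really only the diagonal case: keeping straight that $B^*$ picks up the $I$-contribution of value $0$ on its diagonal while $B\otimes B^*$ sees only walks of positive length, and then pinning $[B\otimes B^*]_{ii}$ to exactly $0$ by pairing the \emph{no positive circuits} upper bound with the \emph{critical circuit through $i$} lower bound. The rest of the argument is just bookkeeping between the two algebras.
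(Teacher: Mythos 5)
The paper itself offers no proof of this proposition --- it is imported verbatim from \cite{BCOQ} --- so there is no in-paper argument to compare yours against; judged on its own terms, your proof is the standard one and is correct. The reduction to $B=(-\lambda)\otimes A$, the reading of $[B^*]_{ji}$ and $[B\otimes B^*]_{ji}$ as maxima over walks from $j$ to $i$ of length $\geq 0$ and $\geq 1$ respectively, the observation that the two only differ through the empty walk on the diagonal, and the two-sided pinning of $[B\otimes B^*]_{ii}$ (upper bound from the absence of positive circuits, lower bound from the critical circuit through $i$, whose normalized weight is $w_A(C)-\lambda\,\ell(C)=0$) are exactly the right skeleton, and the final nontriviality check $[B^*]_{ii}=0\neq\varepsilon$ is the right way to close. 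One caveat is worth recording explicitly: you justify ``$G(B)$ has no positive circuits'' by asserting that $\lambda$ is the maximum circuit mean and citing Proposition \ref{prop:mp_eig}, but that proposition only guarantees this when $G(A)$ is strongly connected; in general an eigenvalue need only be the mean of \emph{some} circuit, and if $\lambda$ were smaller than the maximum circuit mean then $(-\lambda\otimes A)^*$ could acquire $+\infty$ entries and the statement itself would be ill-posed. So your argument proves the proposition under the implicit (and standard) hypothesis that $\lambda$ is the maximum circuit mean, which is how \cite{BCOQ} states it and is the only situation in which the paper uses it --- in the proof of Theorem \ref{thm:SpectralAnalysis} the graph is strongly connected and $\lambda=0$ is the maximum average circuit weight.
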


\section{Max-plus walk: a walk on the max-plus algebra}
A max-plus walk considered here is determined as follows.
Let $k$ and $n$ be a position on one dimensional lattice on $\mathbb{Z}$ and discrete time, respectively.
Then a max-plus vector $\psi_{k}^{n}\in \mathbb{R}_{\max}^{2}$ is determined by the following evolutionary equation:
\begin{eqnarray}\label{mpqw_evolution}
\psi_k^{n} =(P\otimes \psi_{k+1}^{n-1}) \oplus 
(Q\otimes \psi_{k-1}^{n-1}),
\end{eqnarray}
where $P$ and $Q$ are the $2\times 2$ max-plus matrices with $a,b,c,d \in \mathbb{R}_{\max}\setminus \{\varepsilon \}$
\begin{eqnarray*}
P=
\left[
\begin{array}{cc}
a & b \\
\varepsilon & \varepsilon
\end{array}
\right]
,\quad 
Q=
\left[
\begin{array}{cc}
\varepsilon & \varepsilon \\
c & d
\end{array}
\right].
\end{eqnarray*}
We define $H=P\oplus Q$.
We set the initial state $\psi_{k}^{0}$ as
\begin{eqnarray*}
\psi_{k}^{0} = 
\left\{
\begin{array}{l}
\left[
\begin{array}{c}
\alpha\\\beta
\end{array}
\right],\quad k=0,\vspace{5pt}\\
\left[
\begin{array}{c}
\varepsilon\\ \varepsilon
\end{array}
\right],\quad k\ne 0,
\end{array}
\right.
\end{eqnarray*}
where at least one of $\alpha$ and $\beta$ is not equal to $\varepsilon$.
%
%
The max-plus walk with initial state $\psi_{0}^{0}$ is expressed by
\begin{eqnarray}\label{mpqw_evolution2}
\psi_k^{n} =(P\otimes \psi_{k+1}^{n-1}) \oplus 
(Q\otimes \psi_{k-1}^{n-1})
=A_k^n\otimes \psi_0^0,
\end{eqnarray}
where we call $A_k^n$ state decision matrix.
For example, $A_{1}^{3}$ can be written down as
\begin{eqnarray}
A_{1}^{3} = 
(Q\otimes Q\otimes P)
\oplus
(Q\otimes P\otimes Q)
\oplus
(P\otimes Q\otimes Q).
\label{A_1^3}
\end{eqnarray}
We here introduce two max-plus matrices $R,S\in \mathbb{R}_{\max}^{2\times 2}$ as 
\begin{eqnarray*}
R=
\left[
\begin{array}{cc}
c & d \\
\varepsilon & \varepsilon
\end{array}
\right]
,\quad 
S=
\left[
\begin{array}{cc}
\varepsilon& \varepsilon\\
a & b
\end{array}
\right].
\end{eqnarray*}
Then the matrices $P,Q,R$ and $S$ have the relationship concerning the products shown in table \ref{table:products}.
\begin{table}
\caption{Products of the matrices $P$, $Q$, $R$ and $S$}\label{table:products}
\begin{center}
\begin{tabular}{c|cccc} 
    & $P$      & $Q$      & $R$      & $S$ \\ \hline
$P$ & $a\otimes P$ & $b\otimes R$ & $a\otimes R$ & $b\otimes P$ \\
$Q$ & $c\otimes S$ & $d\otimes Q$ & $c\otimes Q$ & $d\otimes S$ \\
$R$ & $c\otimes P$ & $d\otimes R$ & $c\otimes R$ & $d\otimes P$ \\
$S$ & $a\otimes S$ & $b\otimes Q$ & $a\otimes Q$ & $b\otimes S$ 
\end{tabular}
\end{center}
\end{table}
Using table \ref{table:products},  (\ref{A_1^3}) can be written by
\begin{eqnarray*}
A_{1}^{3} &= (b+c)\otimes Q\oplus (b+d)\otimes R\oplus (c+d)\otimes S\\
&=
\left[
\begin{array}{cc}
b+c+d & b+2d \\
\max\{b+2c,a+c+d\} & b+c+d 
\end{array}
\right].
\end{eqnarray*}
Let $\ell$ and $m$ be the number of moving left and right, respectively, during $n$-step.
Then we have the following theorem.
\begin{theorem}\label{thm:MPW_amplitude}
The state decision matrix $A_k^{n}$ can be written as follows.
In cases of $k=-n$ and $n$, $A_{k}^{n}$ are given by
\begin{eqnarray}
A_{-n}^{n}=(n-1)a\otimes P,\quad A_{n}^{n}=(n-1)d\otimes Q.\label{A_boundary}
\end{eqnarray}
In cases of $k=-n+2,-n+4,\dots,n-4,n-2$, then $A_k^{n}$ are given by
\begin{align}
A_k^{n} &=
\bigoplus_{r=1}^{(\ell-1) \wedge m}
\left\{
(\ell -r-1)a +rb+rc+(m-r)d
\right\} \otimes P \notag \\
&  \oplus \bigoplus_{r=1}^{\ell \wedge (m-1)}
\left\{
(\ell -r)a+rb +rc +(m-r-1)d
\right\} \otimes Q \notag \\
& \oplus \bigoplus_{r=1}^{\ell \wedge m}
\left\{
(\ell -r)a+rb+(r-1)c+(m-r)d
\right\} \otimes R \notag\\
&  \oplus \bigoplus_{r=1}^{\ell \wedge m}
\left\{
(\ell -r)a +(r-1)b +rc +(m-r)d
\right\}
\otimes S,\label{eq:SDM}
\end{align}
where $a\wedge b = \min\{a,b\}$.
\end{theorem}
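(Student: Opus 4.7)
The plan is to use the multiplication table (Table \ref{table:products}) to reduce every max-plus monomial in $P$ and $Q$ to a scalar multiple of one of the four matrices $P$, $Q$, $R$, $S$, and then to collect terms according to a suitable combinatorial parameter.

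First I would extract from Table \ref{table:products} the following structural fact: every element of $\{P, Q, R, S\}$ has its finite entries in a single row (row $1$ for $P$ and $R$, row $2$ for $Q$ and $S$), and these two entries are either $(a, b)$ or $(c, d)$. A direct inspection shows that for $X, Y \in \{P, Q, R, S\}$ the product $X \otimes Y$ is a scalar multiple of an element of $\{P, Q, R, S\}$ whose support-row is inherited from $X$ and whose entry-type is inherited from $Y$. Iterating this rule, any word $w_{1} \otimes w_{2} \otimes \cdots \otimes w_{n}$ with letters in $\{P, Q\}$ reduces to a scalar multiple of a matrix determined solely by the ordered pair $(w_{1}, w_{n})$, according to $(P, P) \mapsto P$, $(P, Q) \mapsto R$, $(Q, P) \mapsto S$, $(Q, Q) \mapsto Q$.

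Next, the recursion (\ref{mpqw_evolution2}) realizes $A_{k}^{n}$ as the $\oplus$-sum of $w_{1} \otimes \cdots \otimes w_{n}$ over all length-$n$ words with $\ell = (n - k)/2$ letters $P$ and $m = (n + k)/2$ letters $Q$, i.e.\ over all $n$-step walks from $0$ to $k$. Partitioning these words by the pair $(w_{1}, w_{n})$ produces exactly the four summands of (\ref{eq:SDM}). Within each case I would write a word in block form $P^{a_{1}} Q^{b_{1}} \cdots$ and let $r$ count the number of blocks of one of the two letters; then the multiplicities of $PP$, $QQ$, $PQ$, and $QP$ transitions---and hence, via Table \ref{table:products}, the contributed scalar $\alpha a + \beta b + \gamma c + \delta d$---depend only on $\ell$, $m$, $r$, not on the individual block sizes $a_{i}$, $b_{i}$. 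Consequently every word in a fixed case with a fixed $r$ yields the same scalar, and the $\oplus$-sum collapses to a single max over $r$. The ranges $r \le (\ell - 1) \wedge m$, $r \le \ell \wedge (m - 1)$, $r \le \ell \wedge m$ then arise from the requirement that every block has positive length.

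Finally, the boundary cases $k = \pm n$ are immediate: only $P^{n}$ and $Q^{n}$ are admissible, and the identities $P \otimes P = a \otimes P$ and $Q \otimes Q = d \otimes Q$ yield (\ref{A_boundary}) by induction on $n$. I expect the one genuine difficulty to be purely bookkeeping: tracking the four start/end cases so that the subtly different coefficients $(\ell - r - 1)a$ versus $(\ell - r)a$, and $(r - 1)c$ versus $rc$, are correctly attributed to the $P$, $Q$, $R$, and $S$ summands of (\ref{eq:SDM}).
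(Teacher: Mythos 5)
Your proposal is correct and follows essentially the same route as the paper: partition the length-$n$ words in $P,Q$ by their first and last letters into the four classes reducing to $P$, $Q$, $R$, $S$, use Table \ref{table:products} to collapse each word to a scalar multiple, observe that the scalar depends only on $\ell$, $m$ and the number $r$ of blocks (so the $\oplus$-sum over block sizes collapses), and read off the ranges of $r$ from the positivity of block lengths. Your phrasing of the scalar as a count of $PP$, $PQ$, $QP$, $QQ$ transitions is a slightly tidier bookkeeping device than the paper's explicit block-by-block reduction, but it is the same argument.
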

\begin{proof}
From table \ref{table:products}, 
it is obvious that (\ref{A_boundary}) holds.
In cases of $k=-n+2,-n+4,\dots,n-4,n-2$,
the set of $n$-length path from the origin to position $k=m-\ell$ can be decomposed into the following four cases:
letting the first and last choices of the directions be described by $(X,Y)$, where $X,Y\in \{\mathrm{left},\mathrm{right}\}$, then
we decompose (i) $(\mathrm{left},\mathrm{left})$
(ii) $(\mathrm{right},\mathrm{right})$ (iii) $(\mathrm{right},\mathrm{left})$ and 
(iv) $(\mathrm{left},\mathrm{right})$. 
The corresponding matrix valued weight of path is given as follows. 
\begin{align*}
{\mathrm (i)}~&
\xi_{P}(w_{2r+1},w_{2r},\dots,w_{1})\\
&=\overbrace{P\otimes \cdots \otimes P}^{w_{2r+1}}
\otimes 
\overbrace{Q\otimes \cdots \otimes Q}^{w_{2r}}
\otimes
\overbrace{P\otimes \cdots \otimes P}^{w_{2r-1}}\otimes 
\dots \otimes \overbrace{Q\otimes \cdots \otimes Q}^{w_2}
\otimes \overbrace{P\otimes \cdots \otimes P}^{w_1}\\
&\qquad\qquad\qquad\qquad\qquad\qquad\qquad\qquad\qquad\qquad\qquad\qquad\qquad {\rm for}~r=1,2,\dots, \ell-1\wedge m\\ 
{\mathrm (ii)}~&
\xi_{Q}(w_{2r+1},w_{2r},\dots,w_{1})\\
& =\overbrace{Q\otimes \cdots \otimes Q}^{w_{2r+1}}
\otimes 
\overbrace{P\otimes \cdots \otimes P}^{w_{2r}}
\otimes
\overbrace{Q\otimes \cdots \otimes Q}^{w_{2r-1}}\otimes 
\dots \otimes \overbrace{P\otimes \cdots \otimes P}^{w_2}
\otimes \overbrace{Q\otimes \cdots \otimes Q}^{w_1}\\
&\qquad\qquad\qquad\qquad\qquad\qquad\qquad\qquad\qquad\qquad\qquad\qquad\qquad {\rm for}~r=1,2,\dots, \ell\wedge m-1\\ 
{\rm (iii)}~&
\xi_{R}(w_{2r},\dots,w_{1})\\
& =\overbrace{P\otimes \cdots \otimes P}^{w_{2r}}
\otimes 
\overbrace{Q\otimes \cdots \otimes Q}^{w_{2r-1}}
\otimes\cdots \otimes 
\overbrace{P\otimes \cdots \otimes P}^{w_{2}}
\otimes \overbrace{Q\otimes \cdots \otimes Q}^{w_1}\\
&\qquad\qquad\qquad\qquad\qquad\qquad\qquad\qquad\qquad\qquad\qquad\qquad\qquad {\rm for}~r=1,2,\dots, \ell\wedge m\\ 
{\rm (iv)}~&
\xi_{S}(w_{2r},\dots,w_{1})\\
& =\overbrace{Q\otimes \cdots \otimes Q}^{w_{2r}}
\otimes
\overbrace{P\otimes \cdots \otimes P}^{w_{2r-1}}\otimes 
\dots \otimes \overbrace{Q\otimes \cdots \otimes Q}^{w_2}
\otimes \overbrace{P\otimes \cdots \otimes P}^{w_1}\\
&\qquad\qquad\qquad\qquad\qquad\qquad\qquad\qquad\qquad\qquad\qquad\qquad\qquad {\rm for}~r=1,2,\dots, \ell\wedge m 
\end{align*}
We first consider the case (i). 
From table \ref{table:products}, we have
\begin{align*}
\xi_{P} & (w_{2r+1},w_{2r},\dots,w_{1})\\
&= 
(w_{2r+1}-1)a\otimes P\otimes 
(w_{2r}-1)d\otimes Q\otimes 
\cdots
\otimes
(w_{2}-1)d\otimes Q \otimes 
(w_{1}-1)a\otimes P\\
&= (w_{2r+1} +w_{2r-1} + \cdots + w_{1} - (r+1) )a
\otimes
(w_{2r} +w_{2r-2} + \cdots + w_{2} - r )d\\
&   \qquad\qquad\qquad\qquad\qquad\qquad\qquad\qquad\qquad\qquad\qquad\qquad  \otimes 
P\otimes Q\otimes \cdots \otimes  Q\otimes  P\\
&=(\ell - r-1 )a
\otimes
(m - r)d\otimes
rb\otimes (r-1) c\otimes R \otimes P\\
&=(\ell - r-1 )a
\otimes rb\otimes
 rc \otimes (m - r)d \otimes P\\
&=\{(\ell - r-1 )a
+ rb +
 rc + (m - r)d\} \otimes P.
\end{align*}
For (ii), (iii) and (iv), in similar ways, we have 
\begin{align*}
\xi_{Q}(w_{2r+1},w_{2r},\dots,w_{1})&=\left\{
(\ell -r)a+rb +rc +(m-r-1)d
\right\} \otimes Q,\\
\xi_{R}(w_{2r},w_{2r-1},\dots,w_{1})&=
\left\{
(\ell -r)a+rb+(r-1)c+(m-r)d
\right\} \otimes R,\\
\xi_{S}(w_{2r},w_{2r-1},\dots,w_{1})
&= \left\{
(\ell -r)a +(r-1)b +rc +(m-r)d
\right\}
\otimes S.
\end{align*}
Remark that $A_{k}^{n}$ can be expressed by
\begin{eqnarray*}
&    A_{k}^{n}=\bigoplus_{r=1}^{(\ell-1)\wedge m}\bigoplus_{w_{1},\dots,w_{2r+1}}\xi_{P}(w_{2r+1},\dots,w_{1})
\oplus
\bigoplus_{r=1}^{\ell\wedge (m-1)}\bigoplus_{w_{1},\dots,w_{2r+1}}\xi_{Q}(w_{2r+1},\dots,w_{1})\\
&   \qquad \oplus
\bigoplus_{r=1}^{\ell\wedge m}\bigoplus_{w_{1},\dots,w_{2r}}\xi_{R}(w_{2r},\dots,w_{1})
\oplus
\bigoplus_{r=1}^{\ell \wedge m}\bigoplus_{w_{1},\dots,w_{2r}}\xi_{S}(w_{2r},\dots,w_{1}).
\end{eqnarray*}
Since $\xi_{P}, \xi_{Q}, \xi_{R}$ and $\xi_{S}$ is not depend on $w_{k}$,
 by inserting (i)--(iv) into the above right hand side, we obtain the desired conclusion.
\end{proof}
It is remarkable here that $A_{k}^{n}$ is just an ultradiscretization of ${\bf A}_{k}^{n}$ in (\ref{eq:QW_amplitude}).

\section{Conserved quantities of the max-plus walk}
In the conventional quantum walk, 
the summation of the $\ell^2$-norm of the state vector $\varphi_{k}^{n}$ is conserved quantities with respect to the discrete time $n$. 
While, in the conventional quantum walk, it is difficult to calculate the eigenvalues of the state decision matrix (\ref{eq:QW_amplitude}), in the  max-plus walk, we can easily obtain them. 
In this section we discuss what corresponds to the conserved quantities in the max-plus walk. 
We approach the question by calculating eigenvalues of $A_{k}^{n}$ (\ref{eq:SDM}).

From (\ref{A_boundary}), $A_{k}^{n}$ for $k=-n$ and $n$ are as follows.
\begin{eqnarray*}
A_{-n}^{n} = (n-1)a
\otimes
\left[
\begin{array}{cc}
a&b\\
\varepsilon & \varepsilon
\end{array}
\right]
,\quad 
A_{n}^{n} = (n-1)d
\otimes
\left[
\begin{array}{cc}
\varepsilon & \varepsilon\\
c&d
\end{array}
\right].
\end{eqnarray*}
From proposition \ref{prop:mp_eig}, 
it is easy to see that eigenvalues of $A_{-n}^{n}$ and $A_{n}^{n}$ are $na$ and $nd$, respectively.
Next we consider the case of  $k=-n+2,-n+4,\dots,n-4,n-2$. 
Let us define $\Delta := (b+c) - (a+d)$, 
then $A_{k}^{n}$ (\ref{eq:SDM}) is given as 
\begin{align*}
A_k^{n} &=
\bigoplus_{r=1}^{(\ell-1) \wedge m}
\left\{
(\ell -1)a +md +r \Delta
\right\} \otimes P   \oplus \bigoplus_{r=1}^{\ell \wedge (m-1)}
\left\{
\ell a+(m-1)d + r  \Delta
\right\} \otimes Q \\
&\qquad\qquad\qquad\qquad \oplus \bigoplus_{r=1}^{\ell \wedge m}
\left\{
\ell a+md-c + r  \Delta
\right\} \otimes R  \oplus \bigoplus_{r=1}^{\ell \wedge m}
\left\{
\ell a+md -b + r  \Delta
\right\}
\otimes S\\
&=
(\ell a+md) \otimes \left[
\bigoplus_{r=1}^{(\ell-1) \wedge m}
(-a + r \Delta)\otimes P  
\right.
 \oplus \bigoplus_{r=1}^{\ell \wedge (m-1)}
(- d + r  \Delta ) \otimes Q \\
&\qquad\qquad\qquad\qquad\qquad\qquad\qquad\qquad \oplus \bigoplus_{r=1}^{\ell \wedge m}
(- c + r  \Delta ) \otimes R \left. \oplus \bigoplus_{r=1}^{\ell \wedge m}
(-b + r  \Delta)
\otimes S
\right].
\end{align*}
If $\Delta\geq 0$, then we have 
\begin{align*}
A_k^{n} &=
(\ell a+md) \otimes [
(-a + \min\{\ell-1,m \}\Delta) \otimes P  
\oplus 
(- d + \min\{\ell,m-1 \}  \Delta) \otimes Q \\
&\qquad\qquad\qquad\qquad\qquad\qquad\qquad \oplus 
(- c + \min\{\ell, m \}  \Delta ) \otimes R  \oplus
(-b + \min\{\ell,m \} \Delta )
\otimes S]\\
&=
(\ell a+md) \otimes \left\{
\left[
\begin{array}{cc}
\min\{\ell-1,m\}\Delta & -a+b+\min\{\ell-1,m\}\Delta\\
\varepsilon &\varepsilon \\
\end{array}
\right]\right.
\\
&\qquad\qquad\qquad\quad \oplus 
\left[
\begin{array}{cc}
\varepsilon &\varepsilon \\
c-d+\min\{\ell,m-1\}\Delta & \min\{\ell,m-1\}\Delta\\
\end{array}
\right]
 \\
&\qquad\qquad\qquad\quad \oplus 
\left[
\begin{array}{cc}
\min\{\ell,m\}\Delta &-c+d + \min\{\ell,m\}\Delta \\
\varepsilon &\varepsilon \\
\end{array}
\right]
 \\
&\qquad\qquad\qquad\quad \left. \oplus
\left[
\begin{array}{cc}
\varepsilon &\varepsilon \\
a-b+\min\{\ell,m\}\Delta &\min\{\ell,m\}\Delta \\
\end{array}
\right]
\right\}\\
&=(\ell a +md)\otimes
\left[
\begin{array}{cc}
\min\{\ell,m\}\Delta &-c+d +\min\{\ell,m+1\}\Delta \\
c-d+\min\{\ell,m-1\}\Delta &\min\{\ell,m\}\Delta \\
\end{array}
\right],
\end{align*}
where, in the last equality, $(1,2)$-entry is obtained by
\begin{align*}
(-a+b+\min\{\ell-1,m\}\Delta) &
\oplus 
(-c+d+\min\{\ell,m\}\Delta) \\
&=(\Delta -c+d+\min\{\ell-1,m\}\Delta)
\oplus 
(-c+d+\min\{\ell,m\}\Delta)\\
&=(-c+d+\min\{\ell,m+1\}\Delta)
\oplus 
(-c+d+\min\{\ell,m\}\Delta) \\
&= -c+d+\min\{\ell,m+1\}\Delta.
\end{align*}
Then, from proposition \ref{prop:mp_eig}, we obtain eigenvalues of $A_{k}^{n}$ for $k=-n+2,-n+4,\dots,n-4,n-2$ as 
\begin{align*}
\lambda({A_{k}^{n}}) &= 
\ell a + md +\dfrac{\min\{\ell,m+1\}\Delta + \min\{\ell,m-1\}\Delta}{2} \\
&= \ell a + md + \min \left\{ \ell, m, \dfrac{\ell +m-1}{2}  \right\}\Delta.
\end{align*}
If $\Delta<0$, then we have
\begin{align*}
A_k^{n} &=
(\ell a+md) \otimes [
(-a + \Delta) \otimes P
\oplus 
(- d +  \Delta) \otimes Q\\
 &\qquad \qquad \qquad \qquad \oplus 
(- c + \Delta ) \otimes R 
\oplus
(-b + \Delta )
\otimes S]\\
 &=
(\ell a+md) \otimes \left\{
\left[
\begin{array}{cc}
\Delta & -a+b+\Delta\\
\varepsilon &\varepsilon \\
\end{array}
\right]
 \oplus 
\left[
\begin{array}{cc}
\varepsilon &\varepsilon \\
c-d+\Delta & \Delta\\
\end{array}
\right]\right.
 \\
&\left.\qquad\qquad\qquad\quad  \oplus 
\left[
\begin{array}{cc}
\Delta &-c+d + \Delta \\
\varepsilon &\varepsilon \\
\end{array}
\right]
 \oplus
\left[
\begin{array}{cc}
\varepsilon &\varepsilon \\
a-b+\Delta &\Delta \\
\end{array}
\right]
\right\}\\
&=
(\ell a + md)\otimes
\left[
\begin{array}{cc}
\Delta &-c+d +\Delta \\
c-d&\Delta \\
\end{array}
\right].
\end{align*}
Then, eigenvalue of $A_{k}^{n}$ is 
\begin{eqnarray*}
\lambda(A_{k}^{n}) = \ell a + md + \dfrac{\Delta}{2}.
\end{eqnarray*}
To sum up, eigenvalues of $A_{k}^{n}$ are given as 
\begin{eqnarray*}
\lambda(A_{k}^{n}) = 
\left\{
\begin{array}{l}
 \ell a + md + \min \left\{ \ell, m, \dfrac{\ell +m-1}{2}  \right\}\Delta,\quad {\text{if}}~~ \Delta\geq 0,\\
\ell a + md + \dfrac{\Delta}{2}, \quad  {\text{if}}~~\Delta<0. \\
\end{array}
\right.
\end{eqnarray*}
Here we introduce the following assumption into the entries of the matrix $H$.
\begin{eqnarray}
({\rm A})~
\left\{
\begin{array}{c}
a+d=0, \\
b+c=0. \\
\end{array}
\right.
\label{C1}
\end{eqnarray}
Under the assumption ({\rm A}), 
noting that $\Delta =0$ and $\ell a +md =(\ell -m)a=-ka$,
the state decision matrix $A_{k}^{n}$ is given as follows. 
\begin{eqnarray}
    A_k^n = \label{A_k^n}
\left\{
\begin{array}{l}
\left[
\begin{array}{cc}
na & (n-1)a+b \\
\varepsilon& \varepsilon
\end{array}
\right],\quad 
k=-n,  \vspace{5pt} \\ \vspace{5pt}
\left[
\begin{array}{cc}
-ka & (-k-1)a+b \\
(-k+1)a-b & -ka
\end{array}
\right],\quad 
k=-n+2,-n+4,\dots,n-2,
\\
\left[
\begin{array}{cc}
\varepsilon& \varepsilon\\
(-n+1)a-b & -na 
\end{array}
\right], \quad k=n.
\end{array}
\right.
\end{eqnarray}
From proposition \ref{prop:mp_eig}, eigenvalues of $A_{k}^{n}$ are 
\begin{eqnarray*}
\lambda(A_k^n) =
\left\{
\begin{array}{l}
na,\quad  k=-n ,  \\
-ka,\quad  
k=-n+2,-n+4,\dots,n-2 , 
\\
-na,\quad  k=n. 
\end{array}
\right.
\end{eqnarray*}
Therefore, for any position $k$ eigenvalues can be written as
\begin{eqnarray}
\lambda(A_k^n) =-ka,\quad k=-n, -n+2,\dots,n-2,n.
\label{A_eigenvalue}
\end{eqnarray}
Summation of eigenvalues for all position $k$ is
\begin{eqnarray*}
\sum_{k} \lambda(A_{k}^{n}) = \sum_{k}-ka =0,
\end{eqnarray*}
which is conserved quantities with respect to the discrete time $n$.

In fact, the assumption (A) is the necessary and sufficient condition for that the summation of $\lambda(A_{k}^{n})$ for $k$ is conserved quantities, which is discussed in the following theorem.
\begin{theorem}\label{thm:ConservedQuantity}
In the max-plus walk, the summation of $\lambda(A_{k}^{n})$ for $k$ is conserved quantities with respect to $n$, if and only if
the assumption (A) holds for the matrix $H$.
Then, the conserved quantities are 0.
\end{theorem}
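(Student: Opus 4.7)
The plan is to use the explicit eigenvalue formulas for $A_k^n$ derived in the calculations just above the theorem statement: $\lambda(A_{-n}^n)=na$, $\lambda(A_n^n)=nd$, and for interior $k$ (with $\ell=(n-k)/2$, $m=(n+k)/2$), $\lambda(A_k^n)=\ell a+md+\min\{\ell,m,(\ell+m-1)/2\}\Delta$ when $\Delta\geq 0$, and $\lambda(A_k^n)=\ell a+md+\Delta/2$ when $\Delta<0$, where $\Delta=(b+c)-(a+d)$. The sufficiency direction is essentially what has just been computed above the theorem statement: assumption (A) gives $\Delta=0$, so the formula collapses to $\lambda(A_k^n)=(\ell-m)a=-ka$, and the symmetric sum over $k\in\{-n,-n+2,\ldots,n-2,n\}$ yields $\sum_k(-ka)=0$, independent of $n$.

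For the necessity direction, I would compute $S_n:=\sum_k \lambda(A_k^n)$ explicitly as a polynomial in $n$ and impose constancy. The $\Delta$-independent part, summed over all $\ell+m=n$ with $\ell,m\geq 0$ (boundary values included), telescopes to $\frac{n(n+1)}{2}(a+d)$. In the case $\Delta<0$, each of the $n-1$ interior positions contributes $\Delta/2$, so $S_n=\tfrac{a+d}{2}n^2+\tfrac{a+d+\Delta}{2}n-\tfrac{\Delta}{2}$; constancy in $n$ forces both $a+d=0$ and $\Delta=0$, contradicting $\Delta<0$. In the case $\Delta\geq 0$, one evaluates $T_n:=\sum_{\ell=1}^{n-1}\min\{\ell,n-\ell,(n-1)/2\}$ by splitting on the parity of $n$, obtaining $T_n=(n^2-1)/4$ for odd $n$ and $T_n=(n^2-2)/4$ for even $n$; in either parity $S_n$ is a quadratic polynomial in $n$ whose vanishing of the coefficients of $n^2$ and $n$ forces $a+d=0$ and $\Delta=0$, which is exactly (A). The remaining constant term then equals $0$, matching the asserted value.

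The step I expect to be the most delicate is the parity-based computation of $T_n$ and verifying that the two resulting vanishing conditions (from matching the $n^2$ and $n$ coefficients of $S_n$) produce \emph{both} relations $a+d=0$ and $b+c=0$ rather than only their sum. A minor subtlety is that in the $\Delta\geq 0$ case the constant term of $S_n$ depends on the parity of $n$, so one argues conservation along the even and odd subsequences separately (or, equivalently, uses two consecutive values of $n$ of the same parity to pin down the polynomial coefficients); this does not affect the conclusion. After these checks, the proof reduces to straightforward coefficient matching.
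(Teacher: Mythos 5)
Your proposal is correct and follows essentially the same route as the paper: write $\sum_k\lambda(A_k^n)$ explicitly as a polynomial in $n$ separately for $\Delta<0$ and $\Delta\geq 0$, and force the non-constant coefficients to vanish, which yields $a+d=0$ and $\Delta=0$, i.e.\ assumption (A), with constant value $0$. If anything your bookkeeping is more careful than the paper's (which counts $2n+1$ positions including unreachable ones and does not split on the parity of $n$), but these discrepancies do not change the conclusion in either version.
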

\begin{proof}
The necessity part of the theorem is already proved. 
We give a proof for the sufficiency of the theorem.
In case that $\Delta<0$, from $\ell = \dfrac{n-k}{2}$ and $m=\dfrac{n+k}{2}$, 
then the summation of eigenvalues of $A_k^{n}$ is 
\begin{align*} 
\sum_{k} \lambda(A_k^{n}) &= \sum_{k} \left(\dfrac{n-k}{2} a +\dfrac{n+k}{2}d+\dfrac{\Delta}{2}\right)\\
&=\dfrac{1}{2}\sum_{k}\left\{(-a+d)k + (a+d)n+\Delta\right\}\\
&=\dfrac{1}{2}\sum_{k}\left\{(a+d)n+\Delta\right\}\\
&=\dfrac{1}{2}\{(2n+1)(a+d)n + (2n+1)\Delta\}.
\end{align*}
Since $\Delta<0$, there exists no $\Delta$ such that $\sum_{k} \lambda(A_k^{n})$ yields to be constant. 
In case that $\Delta\geq 0$,
\begin{align*} 
\sum_{k} \lambda(A_k^{n}) &= \sum_{k}
\left\{
\dfrac{n-k}{2} a +\dfrac{n+k}{2}d+
\min
\left\{
\dfrac{n-k}{2},\dfrac{n+k}{2},\dfrac{n-1}{2}
\right\}
\Delta 
\right\}
\\
&=\dfrac{1}{2}\sum_{k}\left\{ (-a+d)k + (a+d)n + \min\{n-k,n+k,n-1\}\Delta\right\}\\
&=\dfrac{1}{2}\sum_{k}\left\{(a+d)n + \min\{n-k,n+k,n-1\}\Delta \right\}\\
&=\dfrac{1}{2}\left\{
n(2n+1)(a+d) + \left[\sum_{k=-n}^{-1}(n+k) + (n-1)
+ \sum_{k=1}^{n}(n-k)\right]\Delta
\right\}\\
&=\dfrac{1}{2}\left\{
n(2n+1)(a+d) + (n^{2}-1)\Delta\right\}   
\end{align*}
If $\Delta>0$, then $\sum_{k} \lambda(A_{k}^{n})$ is not constant.
When $\Delta=0$, we have
\begin{eqnarray*}
\sum_{k} \lambda(A_{k}^{n})=\dfrac{n(2n+1)}{2}(a+d).
\end{eqnarray*}
If $\sum_{k} \lambda(A_k^{n}) $ is constant, then we have $\Delta=0$ and $a+d=0$.
Therefore, $a+d=0$ and $b+c=0$, which are the assumption (A), must hold. 
\end{proof}
It is noted here that
the assumption (A) is a max-plus analogue of the property that
determinants of unitary matrices are 1 in linear algebra. 
Namely, 
\begin{eqnarray*}
{\rm tropdet} (H) =(a\otimes d) \oplus (b\otimes c) = e=0.
\end{eqnarray*}

\section{Spectral analysis on the total time evolution operator}
In this section, we consider the time evolution of the whole system of the max-plus walk.
Here we also impose the assumption (A). 
Now we introduce the infinite matrix $\cal{A}$ and the infinite vector $\Psi^{n}$ as follows.
\begin{eqnarray}
&\hspace{-80pt}
[-1]\hspace{5pt}[0]\hspace{6pt}[1]\notag\\
&{\cal{A}}=
 \begin{array}{c}
\left[-1 \right] \\
\left[0 \right]\\
\left[1 \right]
\end{array}
\hspace{-10pt}
\left[
\begin{array}{ccccc}
\ddots & \vdots & \vdots & \vdots &  \\
\cdots  & \cal{E} & P & \cal{E} & \cdots \\
\cdots  & Q & \cal{E} & P & \cdots \\
\cdots  & \cal{E} & Q & \cal{E} & \cdots \\
  & \vdots & \vdots & \vdots &  \ddots
\end{array}
\right]
,\quad 
\Psi^{n} = 
\begin{array}{c}
\\
\left[-1\right]\\
\left[0\right]\\
\left[1\right]\\
\\
\end{array}
\hspace{-10pt}
\left[
\begin{array}{c}
\vdots\\
\psi_{-1}^{n}\\
\psi_{0}^{n}\\
\psi_{1}^{n}\\
\vdots\\
\end{array}
\right],\label{calA}
\end{eqnarray}
where ${\cal{E}}$ is the matrix (or the vector) whose all entries are $\varepsilon$, and the symbol $[i]$ denotes the two rows (or columns) with indices $(i;{\rm L},i;{\rm R})$, for example, 
\begin{eqnarray*}
&\hspace{109pt}
{\scriptsize
\begin{array}{c}
0;{\rm L}\hspace{5pt}0;{\rm R}
\end{array}}\\
&[{\cal{A}}]_{[-1][0]} = P =
{\scriptsize
\begin{array}{c}
-1;{\rm L}\\[6pt]
-1;{\rm R}\\
\end{array}
}
\hspace{-8pt}
\left[
\begin{array}{cc}
a&b\\
\varepsilon & \varepsilon  
\end{array}
\right].
\end{eqnarray*}
The weighted digraph $G({\cal A})$ is illustrated in figure \ref{fig:A}.
The time evolution of all the position $k$ of the max-plus walk can be written as
\begin{eqnarray*}
\Psi^{n}={\cal{A}}\otimes \Psi^{n-1} = {\cal{A}}^{\otimes n}\otimes \Psi^{0}.
\end{eqnarray*}
\begin{figure}[t!]
\includegraphics[scale=0.5]{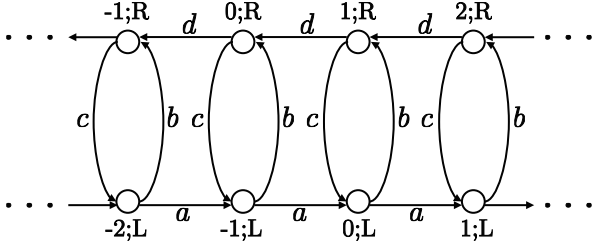}
\caption{The weighted digraph $G({\cal{A}})$.}\label{fig:A}
\end{figure}
Let $k\in \mathbb{Z}$. 
Then the contribution of the state $\psi_{k-i}^{0}$ $(i\in\mathbb{Z})$ for the state $\psi_{k}^{n}$ is $A_{i}^{n}\otimes\psi_{k-i}^{0}$, and so
\begin{eqnarray*}
\psi_{k}^{n} =\bigoplus_{i\in\mathbb{Z}}A_{i}^{n}\otimes\psi_{k-i}^{0}.
\end{eqnarray*}
Therefore, it follows that

\begin{eqnarray*}
\left[
\begin{array}{c}
\vdots\\
\psi_{-1}^{n}\\
\psi_{0}^{n}\\
\psi_{1}^{n}\\
\vdots\\
\end{array}
\right]
=
\left[
\begin{array}{ccccc}
\ddots&\vdots &\vdots &\vdots&\\
\cdots&A_{0}^{n}&A_{-1}^{n}&A_{-2}^{n}&\cdots\\
\cdots&A_{1}^{n}&A_{0}^{n}&A_{-1}^{n}&\cdots\\
\cdots&A_{2}^{n}&A_{1}^{n}&A_{0}^{n}&\cdots\\
&\vdots&\vdots &\vdots &\ddots
\end{array}
\right]
\otimes
\left[
\begin{array}{c}
\vdots\\
\psi_{-1}^{0}\\
\psi_{0}^{0}\\
\psi_{1}^{0}\\
\vdots\\
\end{array}
\right].
\end{eqnarray*}
Hence the $n$-th power of ${\mathcal{A}}$ in (\ref{calA}) is given by 
\begin{eqnarray*}
{\mathcal{A}}^{\otimes n}=
\left[
\begin{array}{ccccc}
\ddots&\vdots &\vdots &\vdots&\\
\cdots&A_{0}^{n}&A_{-1}^{n}&A_{-2}^{n}&\cdots\\
\cdots&A_{1}^{n}&A_{0}^{n}&A_{-1}^{n}&\cdots\\
\cdots&A_{2}^{n}&A_{1}^{n}&A_{0}^{n}&\cdots\\
&\vdots&\vdots &\vdots &\ddots
\end{array}
\right].
\end{eqnarray*}
In the case of the single seed, we discussed in section 3 and section 4, namely,
$
\Psi^{0} = 
\begin{bmatrix}
\cdots&{\cal{E}}&
(\psi_{0}^{0})^{\top}&
{\cal{E}}&
\cdots
\end{bmatrix}^{\top}
$,
$\Psi^{n}$ is given as
\begin{eqnarray*}
\Psi^{n} = 
\left[
\begin{array}{c}
\vdots\\
\psi_{-1}^{n}\\
\psi_{0}^{n}\\
\psi_{1}^{n}\\
\vdots
\end{array}
\right]
=
\left[
\begin{array}{c}
\vdots\\
A_{-1}^{n}\otimes \psi_{0}^{0}\\
A_{0}^{n}\otimes \psi_{0}^{0}\\
A_{1}^{n}\otimes \psi_{0}^{0}\\
\vdots
\end{array}
\right].
\end{eqnarray*}
So we have $\psi_{k}^{n} = A_{k}^{n}\otimes \psi_{0}^{0}$, which is equivalent to (\ref{mpqw_evolution2}).
Then, concerning the spectrum of the infinite matrix $\cal{A}$ in (\ref{calA}), we have the following theorem. 
\begin{theorem}\label{thm:SpectralAnalysis}
For the infinite matrix $\cal{A}$ in (\ref{calA}), 
the spectrum $\sigma(\mathcal{A})$ is
\begin{eqnarray*}
\sigma({\cal{A}}) = \{0\}
\end{eqnarray*}
and the eigenvector $v({\cal{A}})$ corresponding to the eigenvalue 0 is 
\begin{eqnarray*}
v({\cal{A}})=
\kappa\otimes 
\left[
\begin{array}{c}
\vdots \\
X_{-1}\\
X_0\\
X_{1}\\
\vdots
\end{array}
\right], \quad 
X_k=
\left[
\begin{array}{c}
-ak\\
(-k+1)a -b\\
\end{array}
\right],
\end{eqnarray*}
where $\kappa \in \mathbb{R}_{\max}$ is constant. 
\end{theorem}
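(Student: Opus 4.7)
The plan is to split the statement into two pieces: verify directly that the proposed vector $v(\mathcal{A})$ is an eigenvector for $\lambda = 0$, and then show that $0$ is the only eigenvalue via a circuit analysis of the weighted digraph $G(\mathcal{A})$, appealing to an infinite-dimensional adaptation of Proposition~\ref{prop:mp_eig}.

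For the verification, I would compute $(\mathcal{A} \otimes v)_{[k]} = (P \otimes X_{k+1}) \oplus (Q \otimes X_{k-1})$ block-by-block. Substituting $d = -a$ and $c = -b$ from assumption (A), the first coordinate reduces to $\max\bigl(a - a(k+1),\, b - ak - b\bigr) = -ak$ and the second to $\max\bigl(-b - a(k-1),\, -a + (-k+2)a - b\bigr) = (-k+1)a - b$. In both cases the two arguments of the maximum actually coincide, so the resulting block is exactly $X_k$. Hence $\mathcal{A} \otimes v = 0 \otimes v$, placing $0 \in \sigma(\mathcal{A})$ with the displayed eigenvector for any scalar $\kappa$.

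For the uniqueness of the eigenvalue, I would analyse the circuits in $G(\mathcal{A})$. From each vertex $(k; L)$ the outgoing edges point to $(k+1; L)$ and $(k+1; R)$ with weights $a$ and $b$, while from each $(k; R)$ they point to $(k-1; L)$ and $(k-1; R)$ with weights $c$ and $d$. Classify each edge by its source/target type $(T_{\mathrm{src}}, T_{\mathrm{tgt}}) \in \{LL, LR, RL, RR\}$ and let $n_{XY}$ count the $XY$-edges in a given circuit. The identity $\#\{\text{L-sources}\} = \#\{\text{L-targets}\}$ along a closed walk forces $n_{LR} = n_{RL}$, and hence $n_{LL} = n_{RR}$ as well. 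The total weight of any circuit is therefore $(a+d)\,n_{LL} + (b+c)\,n_{LR}$, which vanishes under (A): every circuit in $G(\mathcal{A})$ has average weight $0$.

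To conclude $\sigma(\mathcal{A}) = \{0\}$, suppose $\lambda \neq \varepsilon$ is an eigenvalue with eigenvector $w$ and pick a support vertex $v_0$. The eigenvalue equation iteratively produces a forward walk $v_0 \to v_1 \to \cdots$ saturating $[\mathcal{A}]_{v_{i-1} v_i} + [w]_{v_i} = \lambda + [w]_{v_{i-1}}$; if this walk returns to a previously visited vertex, the resulting circuit has average weight $\lambda$, which by the previous paragraph must equal $0$. The main obstacle, and the step that goes beyond the finite case of Proposition~\ref{prop:mp_eig}, is guaranteeing such a return in the infinite graph $G(\mathcal{A})$. The intended resolution is to exploit the translation-invariance of $\mathcal{A}$: quotienting the walk by the $\mathbb{Z}$-action on the position index leaves only the two L/R classes, so that a repetition of the quotient vertex is forced within three steps, and the corresponding lifted segment provides the desired critical circuit. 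The same saturation conditions applied to $\lambda = 0$ pin down $w$ up to an overall additive scalar $\kappa$, recovering exactly $v(\mathcal{A})$.
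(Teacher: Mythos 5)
Your first two steps are sound, and they differ from the paper in a worthwhile way: the paper obtains the eigenvector indirectly, as a column of $\mathcal{A}^{*}=I\oplus\mathcal{A}\oplus\mathcal{A}^{\otimes 2}\oplus\cdots$ via Proposition~\ref{prop:mp_evec} and the already-computed $A_k^n$ of (\ref{A_k^n}), whereas your block-by-block verification that $(P\otimes X_{k+1})\oplus(Q\otimes X_{k-1})=X_k$ is direct and correct (both arguments of each maximum indeed coincide under (A)). Likewise your edge-type count showing that every circuit of $G(\mathcal{A})$ has weight $(a+d)n_{LL}+(b+c)n_{LR}=0$ actually proves what the paper only reads off Figure~\ref{fig:A}; just note that $n_{LL}=n_{RR}$ does not follow from the source/target count alone but from the position-closure of the circuit, $n_{LL}+n_{LR}=n_{RL}+n_{RR}$, combined with $n_{LR}=n_{RL}$.

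The genuine gap is your final step, and it cannot be repaired as proposed. A closed walk in the two-vertex translation quotient lifts only to a path from $(k;T)$ to $(k';T)$ with $k'\neq k$ in general; summing the saturation equalities along such a path gives $w(\mathrm{path})=L\lambda+[w]_{(k;T)}-[w]_{(k';T)}$, which determines nothing about $\lambda$ unless the walk genuinely returns to the \emph{same} vertex, and the forward saturation walk can drift to infinity (e.g.\ taking the $LL$ edge forever). Worse, the conclusion you are after is not reachable by any circuit argument under the plain definition of eigenvalue for the infinite matrix: take $a=1$, $b=0$, $c=0$, $d=-1$ (which satisfies (A)) and $[w]_{(k;L)}=[w]_{(k;R)}=-k/2$; then
\begin{eqnarray*}
\max\{a+[w]_{(k+1;L)},\,b+[w]_{(k+1;R)}\}=\tfrac12-\tfrac{k}{2},\qquad
\max\{c+[w]_{(k-1;L)},\,d+[w]_{(k-1;R)}\}=\tfrac12-\tfrac{k}{2},
\end{eqnarray*}
so $\mathcal{A}\otimes w=\tfrac12\otimes w$ and $\tfrac12$ is an eigenvalue with a nowhere-$\varepsilon$ eigenvector. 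Propositions~\ref{prop:mp_eig} and \ref{prop:mp_evec} are stated for finite matrices; the paper applies them to $\mathcal{A}$ without further justification, and your honest attempt to supply the missing infinite-dimensional argument shows that $\sigma(\mathcal{A})=\{0\}$ holds only under a restricted notion of spectrum (in effect, the eigenvectors produced by columns of $\mathcal{A}^{*}$), not for all max-plus eigenvectors. So your proof establishes $0\in\sigma(\mathcal{A})$ with the stated eigenvector, but not the uniqueness claim.
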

\begin{proof}

In figure \ref{fig:A}, the maximum average weight of circuits is 0.
From proposition \ref{prop:mp_eig}, the spectrum of ${\cal{A}}$ is obviously \{0\}.
Next we consider the eigenvectors of ${\cal{A}}$.
Since the eigenvalue of ${\cal{A}}$ is 0,
$(-\lambda\otimes {\cal{A}})^{*}$ is equivalent to ${\cal{A}}^{*}$, which is given as
\begin{eqnarray}
{\cal{A}}^{*}=I \oplus {\cal{A}} \oplus  {\cal{A}}^{\otimes 2} \oplus \cdots \oplus {\cal{A}}^{\otimes n}\oplus \cdots,\label{A_star}
\end{eqnarray}
where $I$ is the infinite max-plus identity matrix.
Since all the vertices of $G({\cal{A}})$ are contained in a circuit with the average weight 0,
from proposition \ref{prop:mp_evec},
any columns of ${\cal{A}}^{*}$ is an eigenvector. 
Any neighboring two columns of the right hand side of (\ref{A_star}) is given as
\begin{eqnarray}
\bigoplus_{n}
\left[
\begin{array}{c}
\vdots \\
A_{-1}^{n}\\
A_{0}^{n}\\
A_{1}^{n}\\
\vdots
\end{array}
\right]
=
\left[
\begin{array}{c}
\vdots \\
A_{-1}\\
A_{0}\\
A_{1}\\
\vdots
\end{array}
\right],\quad
A_{k} =
\left[ 
\begin{array}{cc}
-ka&(-k-1)a+b\\
(-k+1)a-b&-ka\\
\end{array}
\right], \quad k\in\mathbb{Z},
\label{eq:AAA}
\end{eqnarray}
since entries of $A_{k}^{n}$ in (\ref{A_k^n}) is not depend on $n$.
Then, the right hand side of (\ref{eq:AAA}) can be written down as
\begin{eqnarray*}
\left[
\begin{array}{cc}
\vdots&\vdots \\ 
a&b\\
2a-b&a\\ 
0&-a+b\\
a-b&0\\ 
-a&-2a+b\\ 
-b&-a\\ 
\vdots&\vdots
\end{array}
\right]
=
\left[
\begin{array}{cc}
\vdots &\vdots \\
X_{-1}&(-a+b)\otimes X_{-1}\\
X_{0}&(-a+b)\otimes X_{0}\\
X_{1}&(-a+b)\otimes X_{1}\\
\vdots&\vdots 
\end{array}
\right].
\end{eqnarray*}
Therefore, each column of the above right hand side is the eigenvector of ${\cal{A}}$, without loss of generality. 
\end{proof}
The eigenvector $v({\cal{A}})$ is set to be the initial state of the max-plus walk, namely, 
\begin{eqnarray*}
\Psi^{0}=
\left[
\begin{array}{c}
\vdots\\
X_{-1}\\
X_{0}\\
X_{1}\\
\vdots
\end{array}
\right].
\end{eqnarray*}
Then it holds that for any $n=0,1,\dots$,
\begin{eqnarray*}
\Psi^{n+1}={\cal{A}}\otimes\Psi^{n} = \Psi^{n},
\end{eqnarray*}
which gives the stationary state of  the max-plus walk independent of $n$.

\section{Concluding remarks and discussions}

\begin{table}
\caption{Comparison between the quantum walk and the max-plus walk} \label{table_comp}
\begin{tabular}{ccc}
&Quantum walk &Max-plus walk\\[3pt] 
Operations&$+, \times $&$\oplus, \otimes$\\[3pt] \hline

Time evolution&$\varphi_{k}^{n} = {\bf P}\varphi_{k+1}^{n-1} + {\bf Q}\varphi_{k-1}^{n-1}$&$\psi_{k}^{n} = P\otimes\psi_{k+1}^{n-1} \oplus Q\otimes\psi_{k-1}^{n-1}$\\[3pt] \hline
\begin{tabular}{c}
Conserved\\
quantities\\
w.r.t. $n$
\end{tabular}
&
\begin{tabular}{c}
$\displaystyle\sum_{k\in\mathbb{Z}}\|\varphi_{k}^{n}\|_{\mathbb{C}^{2}}^{2}$
\\
$=\displaystyle\frac{1}{2}\sum_{k\in\mathbb{Z}}\|{\bf A}_{k}^{n}\|_{F}^{2}=1$
\end{tabular}
&$\displaystyle\sum_{k\in\mathbb{Z}}\lambda(A_{k}^{n})=0$
\\[3pt]  \hline
Quantum coin &
\begin{tabular}{c}
Unitary $\Leftrightarrow$ \\
$\left\{\begin{array}{l}
|{\rm a}|^{2} + |{\rm b}|^{2}= |{\rm c}|^{2}+|{\rm d}|^{2}=1, \\
{\rm a}\bar{{\rm c}}+{\rm b} \bar{{\rm d}}=0
\end{array}
\right.
$
\end{tabular}
&
\begin{tabular}{c}
$a+d = b+c=0$\\
(theorem 4.1)
\end{tabular}
\\[3pt] \hline
%
%
Spectrum &
\begin{tabular}{c}
${\rm Spec}({\cal U})$\\ 
$=\{\Delta^{\prime 1/2}e^{i\theta}: \cos\theta\in[-|{\rm a}|,|{\rm a}|]\}$
\end{tabular}
&
\begin{tabular}{c}
$\sigma({\cal A}) = \{0\}$\\
(theorem 5.1)
\end{tabular}
\\[3pt]  \hline
\begin{tabular}{c}
Eigenvector\\
or generalized \\
eigenfunction \\ 
\end{tabular}
&
\begin{tabular}{c}
bounded \cite{Konno2014,Komatsu2017,Morioka},\\
quadratical \cite{Konno2015},\\
exponential \cite{Komatsu2019} 
\end{tabular}
&$v({\cal A}) \propto$ linear \\ 
\end{tabular}
\end{table}

Analysis on probability cellular automata using quantum operators and descriptions were considered in \cite{SudLl1,SudLl2}. 
The Domany-Kinzel model is a famous probability cellular automaton model having expression of 
oriented site-bond percolation with the signed measure~\cite{KKST,KonnoDuality}. 
An expression with tensorproduct operator, which is usually used in quantum information theory, appears to express the Domany-Kinzel model
and reveals interesting duality relations by using this expression well \cite{KKST}. 
These fascinate trials to approach probabilistic problems through quantum operations  
provide expectation to us solving great open problems e.g., mathematically solving the phase diagram of the Domany-Kinzel model 
using quantum phenomena. 
Then it is reasonable to target quantum walks for discussing relations between cellular automata and quantum phenomena since 
quantum walks are expected to implement several kinds of quantum phenomenon on quantum devices.  
Various kinds of cellular automata are obtained by taking ultradiscretization to difference equations. 
It is possible to take its inverse operation from cellular automata to difference equations, 
for example, a difference equation induced by Domany-Kinzel model can be seen in \cite{KonnoKunimatsuMa}.
Then as a first trial to tackle this challenging problem, in this paper, we proposed a new model which is analogous to quantum walks over the max-plus algebra. 
We named it the max-plus walk.

We summarized our results in table \ref{table_comp}.
We obtained an explicit expression of weight of paths, namely the state decision matrices (\ref{eq:SDM}) which is corresponding to (\ref{eq:QW_amplitude}). 
The state decision matrices (\ref{eq:SDM})
can be regarded as an ultradiscrete limit of (\ref{eq:QW_amplitude}). 
In the conventional quantum walk, the summation of the $\ell^2$-norm of the state vector and also the Frobenius norm of the state decision matrices are conserved quantities.
It is remarkable here that the Frobenius norm can be described by singular values.
As an analogue of it, we derive a conserved quantity of the max-plus walk by the summation over all the positions of eigenvalues of the state decision matrices. 
In the quantum walk, in order to conserve the Frobenius norm, the quantum coin ${\bf P} + {\bf Q}$ needs to be unitary.
On the other hand, in the max-plus walk,  
the condition $a+d=b+c=0$ is the necessary and sufficient condition for the conservation and the conserved value to be 0 (theorem 4.1). 

While in the quantum walk, the spectrum continuously lies on the unit circle whose real part is $[-|{\rm a}|, |{\rm a}|]$ if $\Delta^{\prime 1/2} = 1$, obtained by functional analysis approach, 
in the max-plus walk,
the spectrum is $\{0\}$ which is obtained by a graph theoretical approach.
For the quantum walk on $\mathbb{Z}$,
a classification of the generalized eigenfunctons' shape
are investigated  with respect to the absolute values, from the view point of spectral analysis e.g.,
boundedness \cite{Konno2014,Komatsu2017,Morioka}, 
nonboundedness; polynomially increasing \cite{Konno2015} and exponentially increasing \cite{Komatsu2019}.
On the other hand, 
in the max-plus walk,
since both $2k$ and $2k+1$ components of the eigenvector are proportional to $k$,
the absolute values of the entries of the eigenvector are linearly increasing. 

In a separate paper, we will discuss an analogue of the limit theorem (\ref{eq:Konno}) of our max-plus walk. 
To obtain relationships of our max-plus walk to 
quantum cellular automata also remains open.


\par
\
\par\noindent
\noindent
{\bf Acknowledgments.}
\par
This work was partially supported by Grants-in-Aid for Scientific Research (C) No. 19K03616 and  No. 19K03624 
 of
the Japan Society for the Promotion of Science and 
Research Origin for Dressed Photon.
\par




\begin{thebibliography}{99}
\bibitem{Am}
Ambainis A Bach E Nayak A Vishwanath A and Watrous J 2001
One-dimensional quantum walks
{\it STOC '01 Proceedings of the thirty-third annual ACM symposium on Theory of computing} 37--49
%
\bibitem{BCOQ}
Baccelli F Cohen G Olsder G L and Quadrat J P 1992 
{\it Syncronization and Linearity} (New York: Wiley) 

\bibitem{CGMV}
Cantero MJ Gr{\"u}nbaum FA Moral L  Vel{\'a}zquez L 2012 
The CGMV method for quantum walks
{\it Quantum Information Processing}
{\bf 11} 1149--1192 
%
\bibitem{Childs}
Childs A 2009
Universal computation by quantum walk
{\it Phys. Rev. Lett.} 
{\bf 102} 180501
%
\bibitem{Fey}
Feynman R P 1982
Simulating physics with computers
{\it International Journal of Theoretical Physics} {\bf 21} 467--88
%
\bibitem{HKSS}
Higuchi Y Konno N Sato I and Segawa E 2013 
Quantum graph walks I: Mapping to quantum walks 
{\it Yokohama Mathematical Journal}
 {\bf 59} 33--55
%
\bibitem{KKST}
Katori M  Konno N Sudbury A and Tanemura H 2004
Dulalities for the Domany-Kinzel model 
{\it J. Theoret. Probab.} {\bf 17} 131--44
%
\bibitem{Kitagawa}
Kitagawa T  Rudner MS Berg E and Demler E 2010 
Exploring topological phases with quantum walks 
{\it Phys. Rev. A - Atomic, Molecular, and Optical Physics} {\bf 82} 033429
%
\bibitem{KonnoDuality}
Konno N 2002
Dualities for a class of finite range probabilistic cellular automata in one dimension
{\it J. Statist. Phys.} {\bf 106} 915--22
%
\bibitem{Konno2008}
Konno N 2008 Quantum Walk {\it Lecture Notes in Mathematics} {\bf 1954} 309--452 
%
\bibitem{Konno2014} 
Konno N 2014
The uniform measure for discrete-time quantum walks in one dimension
{\it Quantum Inf. Process.} 
{\bf 13} 1103--1125
%
\bibitem{KonnoKunimatsuMa}
Konno N Kunimatsu T and Ma X 2002
{\it Applied Mathematics and computation} {\bf 155} 727--35
%
\bibitem{Konno2015} 
Konno N and Takei M 2015 
The non-uniform stationary measure for discrete-time quantum walks in one dimension 
{\it Quantum Inf. Comput.} {\bf 15} 1060--1075
%
\bibitem{Komatsu2017} 
Komatsu T and Konno N 2017 
Stationary amplitudes of quantum walks on the higher-dimensional integer lattice 
{\it Quantum Inf. Process.} {\bf 16} 291
%
\bibitem{Komatsu2019} 
Komatsu T and Konno N  2019 
Stationary measure induced by the eigenvalue problem of the one-dimensional Hadamard walk 
{\it arXiv:1905.00330}
%
\bibitem{MS}
Maclagan D and Sturmfels B 2015
{\it Introduction to Tropical Geometry} 
(American Mathematical Society)
%
\bibitem{MAMP}
M{\'a}rquez-M{\'a}rtin I Arnault P Molfetta GD and P{\'e}rez A 2018
Electromagnetic lattice gauge invariance in two-dimensional discrete-time quantum walks
{\it Phys. Rev. A} {\bf 98} 032333 
%
\bibitem{MMOS}
Matsue K Matsuoka L Ogurisu O and Segawa E 2018
Resonant-tunneling in discrete-time quantum walk
{\it Quantum Studies: Mathematics and Foundations} {\bf 6} 35--44.
%
\bibitem{Meyer}
Meyer D 1996
From quantum cellular automata to quantum lattice gases
{\it J. Stat. Phys.} {\bf 85} 551--74
%
\bibitem{Morioka}
Morioka H 2019 
Generalized eigenfunctions and scattering matrices for position-dependent quantum walks 
{\it Reviews in Mathematical Physics} 
online first 1-37
%
\bibitem{Olsder}
Olsder GJ and Roos C 1988
Cramer and Cayley-Hamilton in the Max Algebra
{\it Lin. Alg. Appl.} {\bf 101}87--108
%
\bibitem{Schutter}
Schutter DB and Moor DB 2002
The QR Decomposition and the Singular Value Decomposition in the Symmetrized Max-Plus Algebra Revisited
{\it SIAM Review} {\bf 44} 417--54
%
\bibitem{Sh}
Shikano Y 2013
From Discrete Time Quantum Walk to Continuous Time Quantum Walk in Limit Distribution
{\it J. Comput. Theor. Nanoscience} {\bf 10} 1558--70
%
\bibitem{SudLl1}
Sudbury AW and Lloyd P 1995 
Quantum operators in classical quantum theory. II, The concept of duality in interacting particle systems 
{\it Ann. Probab.} {\bf 23} 1816--1830
%
\bibitem{SudLl2}
Sudbury A W and Lloyd P 1997 
Quantum operators in classical quantum theory. IV, Quasi-duality and thinnings of interacting particle systems 
{\it Ann. Probab.} {\bf 25} 96--114
%
\bibitem{TTMS1996}
Tokihiro T Takahashi D Matsukidaira J and Satsuma J 1996 
From soliton equations to integrable
cellular automata through a limiting procedure {\it Phys. Rev. Lett.} {\bf 76} 3247--50
%
\bibitem{Watanabe2018}
Watanabe S Fukuda A Shigitani H
and Iwasaki M 2018
Min-plus eigenvalue of tridiagonal matrices
in terms of the ultradiscrete Toda equation
{\it J. Phys. A: Math. Theor.} {\bf 51} 444001




\end{thebibliography}
\end{document}